\documentclass[a4paper,leqno]{amsart}

\usepackage{latexsym}
\usepackage{amsmath}
\usepackage{mathrsfs}
\usepackage{amsthm}
\usepackage{amsfonts}
\usepackage{amssymb}
\usepackage{amscd}
\usepackage{bbm}
\usepackage{graphicx}
\usepackage{graphics}
\usepackage{latexsym}
\usepackage{color}

\newcommand{\C}{\mathbb C}

\newcommand{\N}{\mathbb N}
\newcommand{\R}{\mathbb R}
\newcommand{\eps}{\varepsilon}
\DeclareMathOperator{\st}{st}
\DeclareMathOperator{\supp}{supp}
\DeclareMathOperator{\RE}{Re}

\theoremstyle{plain}
\newtheorem{theorem}{Theorem}[section]
\newtheorem{lemma}[theorem]{Lemma}
\newtheorem{corollary}[theorem]{Corollary}

\theoremstyle{definition}
\newtheorem{definition}[theorem]{Definition}

\newtheorem{remark}[theorem]{Remark}
\newtheorem*{remark*}{Remark}

\newtheorem*{definition*}{Definition}
\newtheorem*{fact*}{Fact}
\newtheorem*{transfer*}{Transfer Principle}

\numberwithin{equation}{section}

\begin{document}

\title[Singular Schr\"odinger operators on the half-line]{A characterization of singular Schr\"odinger operators on the half-line}

\author{Raffaele Scandone}
\email{raffaele.scandone@gssi.com}
\address{Gran Sasso Science Institute, Via Crispi 7, 67100 L'Aquila, Italy}
\author{Lorenzo Luperi Baglini}
\email{lorenzo.luperi@unimi.it}
\address{Dipartimento di Matematica, Universit\`{a} di Milano, Via Saldini 50, 20133 Milano, Italy, supported
by grant P 30821-N35 of the Austrian Science Fund FWF.}
\author{Kyrylo Simonov}
\email{kyrylo.simonov@univie.ac.at}
\address{Fakult\"{a}t f\"{u}r Mathematik, Universit\"{a}t Wien, Oskar-Morgenstern-Platz 1, 1090 Vienna, Austria, supported
by grant P 30821-N35 of the Austrian Science Fund FWF.}

\begin{abstract}
We study a class of delta-like perturbations of the Laplacian on the half-line, characterized by Robin boundary conditions at the origin. Using the formalism of nonstandard analysis, we derive a simple connection with a suitable family of Schr\"odinger operators with potentials of very large (infinite) magnitude and very short (infinitesimal) range. As a consequence, we also derive a similar result for point interactions in the Euclidean space $\R^3$, in the case of radial potentials. Moreover, we discuss explicitly our results in the case of potentials that are linear in a neighbourhood of the origin.
\end{abstract}

\date{\today}
\subjclass[2010]{34L40, 34E15, 35J10, 03H05, 26E35, 47S20}
\keywords{Schr\"odinger operators; singular perturbations; point interactions; nonstandard analysis}

\maketitle


\section{Introduction}\label{intro}
In this paper, we study the behavior of Schr\"odinger operators with very short range potentials on the half-line, of the form
\begin{equation}\label{formal_operator}
-\Delta_D+\lambda V\Big(\frac{x}{\eps}\Big),
\end{equation}
where $-\Delta_D$ denotes the Dirichlet Laplacian on $L^2(\R^+)$, $V$ is a real-valued, bounded, compactly supported function, $\eps\ll 1$, and $\lambda:=\lambda(\eps)\in\R$. In particular, we are interested in the case when the potential is virtually zero-range, i.e.~when $\eps\to 0$, and when $\lambda$ becomes very large, i.e.~$\lambda\to\infty$ as $\eps\to 0$, in order to create a delta-like profile. Such models have been widely used in nuclear, atomic, and solid state physics \cite{Bethe_Peierls-1935,Bethe_Peierls-1935-np,Kronig-Pennig-1931,Thomas}, as they provide heuristic Hamiltonian for a non-relativistic quantum particle moving under the influence of a fixed impurity.

Singular problems can be fruitfully studied by means of non-Archimedean methods, which provide rigorous ways to handle infinitesimal and infinite quantities; in particular, in our problem they give the possibility to formalize the idea of letting the potential be zero ranged and delta-like by taking $\eps$ infinitesimal and $\lambda$ infinite in the expression \eqref{formal_operator}. In this paper, we will focus mostly on non-Archimedean methods based on nonstandard analysis, on similar lines of those contained in the seminal paper \cite{Albeverio-Fenstad-HoeghKrohn-1979_singPert_NonstAnal} (see also the recent paper \cite{Benci-Baglini-Simonov}), but there have also been other non-Archimedean approaches to the study of equations like (\ref{formal_operator}), for example, those based on different versions of Colombeau algebras (see e.g.~\cite{Dug, Gunt} and references therein). 

The classical construction of Schr\"odinger operators with a zero-range interaction (also known as point interaction) is based on a restriction-extension approach (see e.g.~\cite{GTV} and references therein): one first considers the closed, symmetric operator $T:=-\Delta|_{\mathcal{C}^{\infty}_c(\R^+)}$ on $L^2(\R^+)$, then the Krein-von Neumann theory \cite{Reed-Simon-II} guarantees that $T$ admits a one-parameter family of self-adjoint extensions $\{-\Delta_{\alpha}\}_{\alpha\in\R\cup\{\infty\}}$, where $-\Delta_{\alpha}$ acts as the free Laplacian, and functions $\psi\in\mathcal{D}(-\Delta_{\alpha})$ are qualified by the boundary condition $\alpha\psi(0)=\psi'(0)$ (the case $\alpha=+\infty$ corresponds to the Dirichlet Laplacian). More precisely, we have the following explicit characterization (see e.g.~\cite[Section 6.2.2.1]{GTV} and \cite{DM-2015-halfline}).

\begin{equation}\label{repres_point}
-\Delta_{\alpha}f=-f'',\qquad\mathcal{D}(-\Delta_{\alpha})=\left\{f\in L^2(\R^+)\,\Bigg|
\begin{array}{c}
f,f'\in W^{1,1}(\R^+),\\
f''\in L^2(\R^+),\\
\alpha f(0)=f'(0)
\end{array}
\right\}.
\end{equation}

Analogous models in $\R^d$ have been intensively studied, since the first rigorous attempt by Berezin and Faddeev \cite{Berezin-Faddeev-1961}, and subsequent characterizations by many other authors, see e.g.~\cite{Albeverio_Brzesniak-Dabrowski-1995,Albeverio-Fenstad-HoeghKrohn-1979_singPert_NonstAnal,albverio-kurasov-2000-sing_pert_diff_ops,DMSY-2017,Grossmann-HK-Mebkhout-1980,Grossmann-HK-Mebkhout-1980_CMPperiodic,MO-2016,MOS-2018,Posilicano2000_Krein-like_formula,Scarlatti-Teta-1990} (we refer to the monograph \cite{albeverio-solvable}, the surveys \cite{Albeverio-Figari,DFT-brief_review_2008}, and references therein for a comprehensive overview). 

In particular, it is well-known that the  operator $-\Delta|_{\mathcal{C}^{\infty}_c(\R^d\setminus\{0\})}$ is essentially self-adjoint in dimension $d\geq 4$, while it admits non-trivial self-adjoint extensions in dimension $d=1,2,3$. More precisely, in dimension $d=2,3$, there is a one-parameter family $\{-\Delta_{\alpha}\}_{\alpha\in\R\cup\{\infty\}}$ of self-adjoint extensions, analogously to the half-line case, while in dimension $d=1$ there is a richer family of extensions, including the so-called $\delta$ and $\delta'$ interactions.

A particularly relevant question is to understand whether physically meaningful operators of the form \eqref{formal_operator} converge (in a suitable sense), as $\eps\to 0$ and $\lambda\to\infty$, to a Schr\"odinger operator with point interaction. Such question has been widely studied in the Euclidean case $\R^d$ (see e.g.~\cite{albeverio-solvable,AHK-1981-JOPTH,GH10,GH13,MS-2018-shrinking-and-fractional} and references therein). The half-line case has been investigated in \cite{DM-2015-halfline,Seba}, by means of classical techniques. In \cite{DM-2015-halfline} the authors discussed the sub-critic scaling $\lambda\sim\eps^{-2+\delta}$, showing that the singular perturbation preserves the Dirichlet boundary condition. The critical case $\lambda\sim\eps^{-2}$ has been addressed in \cite{Seba}, where the author shows that the singular perturbation can produce a non-trivial boundary condition in the limit as $\eps\to 0$, under suitable spectral assumptions on the potential. Analogous results have been proven within the framework of nonstandard analysis, in the case of a square potential \cite{Albeverio-Fenstad-HoeghKrohn-1979_singPert_NonstAnal,Nelson}.

In particular, in their pioneering work~\cite{Albeverio-Fenstad-HoeghKrohn-1979_singPert_NonstAnal}, Albeverio, Fenstad, and H\o{}egh-Krohn 
have shown that the self-adjoint operator $-\Delta_\alpha$ on $L^2(\R^+)$ can be associated with the nonstandard self-adjoint operator
\begin{equation}
H_{\alpha} = -\Delta_D + \lambda_{\alpha} \mathbbm{1}_{[0,\varepsilon]}
\end{equation}
on $^*\! L^2(\mathbb{R}^+)$ (the ‘‘nonstandard extension of $L^2(\R^+)$"), where $\varepsilon$ is a fixed positive infinitesimal, $\mathbbm{1}_{[0,\varepsilon]}$ is the indicator function for the nonstandard interval $[0,\eps]$, and $\lambda_\alpha$ is an infinite coupling constant, of the form
\begin{equation}
\lambda_\alpha = -\Bigl( k + \frac{1}{2} \Bigr) \frac{\pi^2}{\varepsilon^2} + \frac{2}{\varepsilon} \alpha + \beta,
\end{equation}
for arbitrary $k\in\mathbb{N}$ and $\beta\in\mathbb{R}$. As a consequence, they also derived a similar result in the Euclidean space $\R^3$, using the factorization of $-\Delta_{\alpha}$ with respect to the angular momentum decomposition of $L^2(\R^3)$.

The main goal of this paper is to extend the results in \cite{Albeverio-Fenstad-HoeghKrohn-1979_singPert_NonstAnal} to a large class of compactly supported potentials.

Let us state explicitly that we do not assume the readers to have any prior knowledge of nonstandard analysis: analogously to \cite{Albeverio-Fenstad-HoeghKrohn-1979_singPert_NonstAnal}, our main results, and their proofs, use only some of the most basic nonstandard definitions and notions, which we will reintroduce in Section \ref{NSA}. The rest of the paper is structured as follows. In Section \ref{MR}, we state our main results, Theorem \ref{main} and Corollary \ref{cor:app}, whose proofs are postponed to Section \ref{PMR}. In Section \ref{linear}, we discuss explicitly the case of potentials that are linear in a neighbourhood of the origin. In Section \ref{Conclusions}, we draw some final remarks and perspectives.

\section{Nonstandard concepts}\label{NSA}

Nonstandard analysis belongs to the field of non-Archimedean mathematics, i.e.~to the study of structures containing infinite and infinitesimal quantities. The structure we will use is that of non-Archimedean fields\footnote{More in general, our problem could be approached using non-Archimedean rings, as it is done, e.g.~in Colombeau theory. However, as in this paper we will focus only on nonstandard analysis, we prefer not to give the most general possible definitions.}:

\begin{definition}
Let $\mathbb{K}$ be an infinite totally ordered field. An element $\xi \in \mathbb{K}$ is:

\begin{itemize}
\item infinitesimal if, for all positive $n\in \mathbb{N}$, $|\xi|<\frac{1}{n}$;
\item finite if there exists $n\in \mathbb{N}$ such that $|\xi |<n$;
\item infinite if, for all $n\in \mathbb{N}$, $|\xi |>n$ (equivalently, if $\xi $ is not finite).
\end{itemize}
We say that $\mathbb{K}$ is non-Archimedean if it contains an infinitesimal $\xi \neq 0$, and that $\mathbb{K}$ is superreal if it properly extends $\mathbb{R}$.
\end{definition}

By the completeness of $\mathbb{R}$, it follows that any infinite totally ordered superreal field is automatically non-Archimedean (whilst the converse is false: not all non-Archimedean fields are superreal).

Infinitesimals can be used to formalize the notion of closeness:

\begin{definition}
\label{def infinite closeness} We say that two numbers $\xi ,\zeta \in {\mathbb{K}}$ are infinitely close if $\xi -\zeta $ is infinitesimal. In this case we write $\xi \sim \zeta $.
\end{definition}

The relation $\sim$ is clearly an equivalence relation; in the case of the superreal fields, the following Theorem, whose proof is just a simple application of the completeness of $\mathbb{R}$, see e.g.~\cite[Theorem 5.6.1]{Goldblatt}, completely characterizes the quotient space of finite elements of $\mathbb{K}$, making precise the relationship between $\mathbb{K}$ and $\mathbb{R}$:

\begin{theorem}
If $\mathbb{K}$ is a superreal field, every finite number $\xi \in \mathbb{K}$ is infinitely close to a unique real number $r\sim \xi $, called the \textbf{standard part} of $\xi$.
\end{theorem}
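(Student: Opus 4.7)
The plan is to combine the Dedekind completeness of $\mathbb{R}$ with the Archimedean property of $\mathbb{R}$ inside $\mathbb{K}$, since those are the only extra features $\mathbb{R}$ has over a generic totally ordered superreal extension.

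Given a finite $\xi \in \mathbb{K}$, I would first construct a candidate standard part by setting
\[
A \;=\; \{\, r \in \mathbb{R} \,:\, r < \xi \,\}.
\]
Because $\xi$ is finite, there is some $n \in \mathbb{N}$ with $|\xi| < n$, so $-n \in A$ and $n$ is an upper bound for $A$ in $\mathbb{R}$. Hence $A$ is a nonempty subset of $\mathbb{R}$ bounded above, and Dedekind completeness of $\mathbb{R}$ provides $r := \sup A \in \mathbb{R}$.

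The next step is to show $r \sim \xi$. Suppose, for contradiction, that $|r - \xi|$ is not infinitesimal, so that $|r - \xi| > 1/n$ for some $n \in \mathbb{N}$. If $\xi > r + 1/n$, then picking any rational $q$ with $r < q < r + 1/n$ yields $q < \xi$, so $q \in A$, contradicting $r = \sup A$. If $\xi < r - 1/n$, then any rational $q$ with $r - 1/n < q < r$ satisfies $s < \xi < q$ for every $s \in A$, making $q$ an upper bound for $A$ strictly below $r$, again a contradiction. Thus $r \sim \xi$.

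Finally, uniqueness is essentially automatic: if two reals $r_1, r_2$ both satisfy $r_i \sim \xi$, then $r_1 - r_2$ is an infinitesimal element of $\mathbb{R}$, and since $\mathbb{R}$ itself is Archimedean the only infinitesimal it contains is $0$, so $r_1 = r_2$.

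I don't expect a serious obstacle here; the only mildly delicate point is to keep the orderings straight in $\mathbb{K}$ when deriving the two contradictions, and to use rationals (or any real strictly between $r$ and $r \pm 1/n$) so that the resulting quantities live in $\mathbb{R}$ and can be compared with $\sup A$ via the completeness of $\mathbb{R}$ rather than any feature of $\mathbb{K}$.
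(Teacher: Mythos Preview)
Your argument is correct and is exactly the standard completeness argument the paper has in mind: the paper does not actually spell out a proof but simply says it is ``a simple application of the completeness of $\mathbb{R}$'' and refers to Goldblatt, and your Dedekind-cut construction $A=\{r\in\mathbb{R}:r<\xi\}$ with $r=\sup A$ is precisely that application. There is nothing to add.
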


Given a finite $\xi\in\mathbb{K}$, we will denote its standard part by $\st(\xi)$. Moreover, with a small abuse of notation, we will also write $\st(\xi )=+\infty $ (resp.~$\st(\xi )=-\infty $) if $\xi \in \mathbb{K}$ is a positive (resp.~negative) infinite number.

In the sequel, we will work also with complex numbers. The analogue of complex numbers for superreal fields can be easily introduced by considering
\begin{equation*}
\mathbb{K}+i\mathbb{K},
\end{equation*}%
namely, a field of numbers of the form 
\begin{equation*}
a+ib,\ a,b\in \mathbb{K}.
\end{equation*}

In this paper, the superreal field we use is any field of hyperreal numbers $^{\ast}\!\mathbb{R}$ coming from nonstandard analysis. Readers interested in a comprehensive introduction to nonstandard analysis are referred to \cite{Goldblatt}; here, we will only recall (in a semi-formal way) the very basic facts we need in the following sections, following the simplified presentation of nonstandard methods of \cite{BDNF}.

For most applications, nonstandard methods consist of just two ingredients:
\begin{itemize}
\item two mathematical universes\footnote{We will use this notion informally here, referring to \cite{Goldblatt} or any other book on nonstandard analysis for a precise formalization; informally, a mathematical universe $\mathbb{U},\mathbb{V}$ is a collection that is closed with respect to the basic set operations and includes the reals and any ‘‘mathematical object" that can be constructed starting with the reals, e.g.~functions, sets of functions, functionals, sets of functionals, sets of sets of functions and so on.} $\mathbb{U},\mathbb{V}$;
\item a star map $\ast:\mathbb{U}\rightarrow\mathbb{V}$ that associates to every object $x\in \mathbb{U}$ an object $^{\ast}\!x\in\mathbb{V}$, called its {\itshape hyperextension}, which satisfies the transfer principle.
\end{itemize}
As always in applications of nonstandard methods to analysis, we will also assume that $\mathbb{R}\subset\,\!^{\ast}\!\mathbb{R}$ and, for all real numbers $r\in\mathbb{R}$, we identify $r$ with its hyper-extension, namely we let $^{\ast}r=r$. Moreover, where there is no danger of confusion, we also identify a real function $f$ with its hyper-extension $^*\! f$.

The transfer principle states the following:

\begin{transfer*} Let $P\left(a_{1},\dots,a_{n}\right)$ be an elementary property of the objects $a_{1},\dots,a_{n}$. Then $P\left(a_{1},\dots,a_{n}\right)$ holds if and only if $P\left(\,\!^{\ast}a_{1},\dots,\,\!^{\ast}a_{n}\right)$ holds. \end{transfer*}

Formally, {\itshape elementary properties} means first order properties (in some language) with bounded quantifiers, see \cite[Section 4.4]{Chang} for details. For our semi-formal presentation, it is sufficient to know that a property is elementary if:

\begin{itemize}
\item it involves only the usual logic connectives (and, or, if, then, not), quantifiers (there exists, for all), and the basic notions of function, the value of a function at a given point, relation, domain, codomain, ordered n-tuple, the i-th component of an ordered
tuple, and membership;
\item all quantifiers are bounded by some set, namely, they appear in the form $\forall x\in X$ or $\exists x\in X$, where $X$ is some set.
\end{itemize}

\begin{fact*} We will apply the transfer principle only to the following properties, that are all elementary:
\begin{itemize} 
\item ``for all $\varepsilon\in\R^+$ and $\lambda\in\mathbb{R}$, the operator $H_{\lambda,\varepsilon}:=-\Delta_{D}+\lambda V\left(\frac{x}{\varepsilon}\right)$ admits a unique self-adjoint realization on $L^{2}(\mathbb{R}^{+})$, which is bounded from below in $\mathbb{R}$. Moreover, for all $z\in\mathbb{C}\setminus\mathbb{R}$, for all $g\in L^{2}\left(\mathbb{R}^{+}\right)$, 
$$((H_{\lambda,\eps}-z)^{-1}g)(x)=\int_{\mathbb{R}^{+}} G_{\lambda,\eps,z}(x,y)g(y)dy\mbox{'',}$$ 
which by transfer becomes:\\

\noindent``for all $\varepsilon\in\,\!^{\ast}\R^+$ and $\lambda\in\,\!^{\ast}\!\mathbb{R}$, the operator $H_{\lambda,\varepsilon}:=-\Delta_{D}+\lambda V\left(\frac{x}{\varepsilon}\right)$ admits a unique self-adjoint realization on $^{\ast}(L^{2}(\mathbb{R}^{+}))$, which is bounded from below in $^{\ast}\!\mathbb{R}$ (notice that this hence holds for any $\varepsilon$ positive infinitesimal and for every $\lambda$). Moreover, for all $z\in\,\!^{\ast}\!\left(\mathbb{C}\setminus\mathbb{R}\right)$, for all $g\in\,\!^{\ast}\!\left(L^{2}\left(\mathbb{R}^{+}\right)\right)$, 
$$((H_{\lambda,\eps}-z)^{-1}g)(x)=\int_{^{\ast}\!\mathbb{R}^{+}} G_{\lambda,\eps,z}(x,y)g(y)dy\mbox{'';}$$
\item ``for all $\omega,\theta\in\R$ and $\eps\in\R^+$, the operator $-\Delta+\frac{\theta+\omega\eps}{\eps^2}W\left(\frac{x}{\eps}\right)$ admits a unique self-adjoint realization on $L^2(\R^3)$'', which by transfer becomes: ``for all $\omega,\theta\in\,\!^*\R$ and $\eps\in\!\mathstrut^*\R^+$, the operator $-\Delta+\frac{\theta+\omega\eps}{\eps^2}W\left(\frac{x}{\eps}\right)$ admits a unique self-adjoint realization on $^*\! L^2(\R^3)$''; 
\item Taylor's formula;
\item nonstandard formulation of equations \eqref{finde} and \eqref{eq:res_ker}.
\end{itemize}
\end{fact*}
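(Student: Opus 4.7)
The content of Fact* has two parts for each bullet: (i) the standard-universe version of the displayed statement is true, and (ii) that statement is elementary in the sense of being first-order with bounded quantifiers, so that the Transfer Principle applies. My plan is to treat (i) and (ii) separately, with nearly all the analytic work going into (i) for the two operator statements, and only a bookkeeping check remaining for Taylor's formula and the two equations cited in the last bullet.

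For the first bullet I would establish self-adjointness of $H_{\lambda,\eps}:=-\Delta_{D}+\lambda V(\cdot/\eps)$ on $L^{2}(\R^{+})$ by the Kato--Rellich theorem. The Dirichlet Laplacian $-\Delta_{D}$ is self-adjoint and non-negative on the domain recorded in \eqref{repres_point} (case $\alpha=+\infty$); since $V$ is real, bounded, and compactly supported, for every $\eps\in\R^{+}$ and $\lambda\in\R$ the multiplication operator $\lambda V(\cdot/\eps)$ is everywhere defined, symmetric and bounded, hence an infinitesimally small perturbation of $-\Delta_{D}$. Kato--Rellich then gives self-adjointness of $H_{\lambda,\eps}$ on $\mathcal{D}(-\Delta_{D})$, a lower bound $\geq -|\lambda|\,\|V\|_{\infty}$, and uniqueness of the realization on this domain. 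The resolvent representation is then standard Sturm--Liouville theory: for $z\in\C\setminus\R$ one constructs $G_{\lambda,\eps,z}(x,y)$ from two linearly independent solutions of $(-f''+\lambda V(\cdot/\eps)-z)f=0$, one satisfying the Dirichlet boundary condition at $0$ and the other lying in $L^{2}$ near $+\infty$, and verifies that the associated integral operator coincides with $(H_{\lambda,\eps}-z)^{-1}$. The second bullet is handled identically: assuming $W$ is bounded and compactly supported, the potential $\tfrac{\theta+\omega\eps}{\eps^{2}}W(\cdot/\eps)$ is a bounded symmetric perturbation of the $\R^{3}$ free Laplacian for each fixed $\eps>0$, so Kato--Rellich again yields self-adjointness on $H^{2}(\R^{3})$. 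Taylor's formula and the two referenced equations hold by their standard derivations and contribute nothing new at step (i).

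The genuine subtlety lies in step (ii), the elementarity check, and this is where I would expect the main (conceptual rather than technical) obstacle. For each displayed sentence one has to exhibit a formalization in which all quantifiers range over sets already present in the superstructure $\mathbb{U}$, and in which operator-theoretic notions --- ``the unique self-adjoint realization'', ``bounded from below'', ``$(H-z)^{-1}g$'', ``integral kernel'' --- are definable by first-order formulas on those sets. The relevant sets are $\R$, $\C$, $\R^{+}$, $L^{2}(\R^{+})$, $L^{2}(\R^{3})$, the Sobolev domains involved, and spaces of measurable functions of two variables; all of these belong to $\mathbb{U}$, and each of the predicates above can be written as a bounded first-order statement over them (for instance, self-adjointness is the conjunction of symmetry and the range condition $\mathrm{ran}(H\pm i)=L^{2}$, both of which are first-order once the domain is quantified over).

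Once elementarity is granted, the conclusion is automatic: the Transfer Principle converts each standard statement into its hyperextended counterpart with no further argument, in particular legitimizing the substitution of a positive infinitesimal $\eps\in{}^{\ast}\R^{+}$ and an arbitrary hyperreal $\lambda\in{}^{\ast}\R$ (respectively $\omega,\theta\in{}^{\ast}\R$) in the two operator formulas, and giving the resolvent identity with $G_{\lambda,\eps,z}$ interpreted as a hyperextended kernel integrated over $^{\ast}\R^{+}$.
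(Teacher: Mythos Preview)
The paper does not actually prove this Fact*: it is presented as a declarative list of the elementary properties to which the transfer principle will later be applied, not as a theorem with an accompanying proof. The standard-universe truth of the two operator statements is justified elsewhere in the paper (Section~\ref{MR}) by citing \cite{Reed-Simon-II,Sch}, and elementarity is simply asserted on the basis of the informal criteria given immediately before the Fact*.

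Your proposal is correct and in fact supplies more detail than the paper does. Your step~(i) via Kato--Rellich and Sturm--Liouville theory is exactly the content behind the citations \cite{Reed-Simon-II,Sch}, and your step~(ii) --- unpacking ``self-adjoint'', ``bounded below'', and ``resolvent'' as bounded first-order formulas over sets in the superstructure --- is the honest verification that the paper leaves implicit. So there is no disagreement in approach; you have simply written out what the paper takes for granted.
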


\begin{definition} A model of nonstandard methods is a triple $\langle \mathbb{U},\mathbb{V},\ast\rangle$ where $\mathbb{U},\mathbb{V}$ are mathematical universes and $\ast:\mathbb{U}\rightarrow\mathbb{V}$ is a star map that satisfies the transfer principle.\end{definition} 

To avoid confusion, we will use the term {\itshape standard} when referring to objects in $\mathbb{U}$, e.g.~when talking about real functions.

From now on, we consider given a model of nonstandard methods. We will work in the non Archimedean field $^{\ast}\!\mathbb{R}$. We will use also the following notations\footnote{Which slightly differ those more usual in nonstandard analysis, but are closer to those of classical analysis.}:
\begin{itemize}
\item given $x,y\in\,\!\!^*\R$, we write $x=O(y)$ if there exists $C\in\R^+$ such that $|x|\leq C|y|$;
\item we write $x=\Theta(y)$ if $x=O(y)$ and $y=O(x)$;
\item we write $x=o(y)$ if there exists $\eps$ infinitesimal such that $|x|\leq\eps |y|$; in particular, $x=o(1)$ means that $x$ is infinitesimal;
\item let $\Omega$ be an open subset of $\mathbb{R}^{n}$, and let $f:\!^{\ast}\Omega\rightarrow\!^{\ast}\mathbb{R}$; we let $\st(f)$ be the function $\st(f):\!\,^{\ast}\Omega\rightarrow\mathbb{R}\cup\{-\infty,+\infty\}$ such that for every $x\in\,\!^*\Omega$, $\st(f)(x):=\st\left(f(x)\right)$.
\end{itemize}

\section{Main results}\label{MR}

In this Section we state the main results of this paper, postponing their proofs to Section \ref{PMR}. 

We will use the following notations:

\begin{itemize}
\item for $z\in\C\setminus\R_{\geq 0}$, we write $\sqrt{z}$ to denote the branch of the square root such that $\RE\sqrt{z}>0$;
\item given $x\in\R^d$, we let $\langle x\rangle:=\sqrt{1+|x|^2}$;
\item given any two functions $f,g\in\mathcal{C}^1(I)$, for some interval $I\subseteq\R$, we let $W(f,g):=fg'-f'g$ be the Wronskian of $f$ and $g$;
\item for any set $X\subseteq\R^d$, we denote by $\mathbbm{1}_X$ the characteristic function of $X$;
\item given a function $f\in L^1_{\mathrm{loc}}(\R^3)$, we let $\mathbb{P}_0f$ be its spherical mean, given by
$$(\mathbb{P}_0f)(x)=\frac{1}{4\pi}\int_{\mathbb{S}^2}f\left(|x|\omega\right)d\omega,\quad x\in\R^3.$$
\end{itemize}

Let us define the operators we want to study. We start by fixing a compactly supported potential $V\in L^{\infty}(\R^+,\R)$. For any given $\eps\in\R^+$ and $\lambda\in\R$, it is well known \cite{Reed-Simon-II,Sch} that the operator
\begin{equation}\label{def_H}
H_{\lambda,\eps}:=-\Delta_D+\lambda V\left(\frac{x}{\eps}\right)
\end{equation}
admits a unique self-adjoint realization on $L^2(\R^+)$, which is bounded from below. There are more general conditions on $V$ which ensure the self-adjointness of $H_{\lambda,\eps}$, see e.g.~the discussion in \cite{DM-2015-halfline}. For the sake of concreteness, we do not discuss here the full general case, even though our arguments can be easily adapted.

Now, for any given $\eps\in\,\!^*\R^+$ and $\lambda\in\,\!^*\R$, as observed in Section \ref{NSA} the transfer principle guarantees that $H_{\lambda,\eps}$ defines a self-adjoint, bounded below operator on $^*\! L^2(\R^+)$. We address the question of whether $H_{\lambda,\eps}$ can be restricted to a standard operator. Let us first provide a precise definition.

\begin{definition}\label{de:ns}
Let $\Omega$ be an open subset of $\R^d$. Let $A$ be a self-adjoint operator on $^*\! L^2(\Omega)$. We say that $A$ is \emph{near standard} if there exists a self-adjoint operator $B$ on $L^2(\Omega)$ such that, for every $f\in L^2(\Omega)$ and for every $z\in\C\setminus\R$, we have
$$\st\big((A-z)^{-1}\,^*\! f\big)\,|_{\Omega}=(B-z)^{-1}f.$$ 
In this case, we write $\st(A)=B$.
\end{definition}

Our main result is the following.

\begin{theorem}\label{main}
Let us fix a potential $V\in L^{\infty}(\R^+,\R)$, with $supp(V)\subseteq[0,M]$ for some $M>0$, and let $\eps,\lambda\in\mathstrut^*\R$, with $\eps$ positive and infinitesimal and $|\lambda|=O(\eps^{-2})$. Then the self-adjoint operator $H_{\lambda,\eps}$ is near standard. 

Moreover, given $\theta\in\R$, let $\psi_{\theta}\in W_{\mathrm{loc}}^{2,\infty}(\R^+)\hookrightarrow\mathcal{C}^1(\R^+)$ be a non-zero solution to $(-\Delta_D+\theta V)\psi_{\theta}=0$, and consider the set
$$\Upsilon:=\{\theta\in\R\,|\,\psi'_{\theta}(M)=0\}.$$
If $\lambda$ has the form
$$\lambda=\frac{\theta}{\eps^2}+\frac{\omega}{\eps}+o(\eps^{-1}),\quad\theta\in\Upsilon,\,\omega\in\R,$$
then $\st(H_{\lambda,\eps})=-\Delta_{\alpha}$, with
\begin{equation}\label{alphafinale}
\alpha=\frac{\omega}{\psi_{\theta}(M)^2}\int_0^MV(t)\psi_{\theta}^2(t)dt.
\end{equation}
For all the other choices of $\lambda$, we have $\st(H_{\lambda,\eps})=-\Delta_{D}$.
\end{theorem}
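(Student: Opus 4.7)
The plan is to control $(H_{\lambda,\eps}-z)^{-1}$ through its explicit integral kernel, reducing everything to a matching problem at $x=M\eps$. For a one-dimensional Dirichlet Schr\"odinger operator on $\R^+$, the Wronskian construction gives
\[
G_{\lambda,\eps,z}(x,y)=-\frac{u_{\eps,z}(x_<)\,v_{\eps,z}(x_>)}{W(u_{\eps,z},v_{\eps,z})},
\]
where $u_{\eps,z}$ solves $H_{\lambda,\eps}u=zu$ with $u(0)=0$, $u'(0)=1$, and $v_{\eps,z}$ is the $L^2$-solution at infinity, which on $[M\eps,\infty)$ is a multiple of $e^{\ii\sqrt{z}\,x}$. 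Since $V$ vanishes outside $[0,M]$, both solutions are explicit exponentials in $\sqrt{z}$ on $(M\eps,\infty)$, and all nontrivial $\eps$-dependence is encoded in the matching data $(u_{\eps,z}(M\eps),u_{\eps,z}'(M\eps))$. Near-standardness of $H_{\lambda,\eps}$ then follows by computing the standard parts of these data and comparing the resulting limit kernel with the known resolvent kernel of $-\Delta_\alpha$ or $-\Delta_D$.

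To extract the matching data I rescale: set $g_\eps(y):=\eps^{-1}u_{\eps,z}(\eps y)$ for $y\in[0,M]$, so that
\[
-g_\eps''(y)+(\lambda\eps^2)V(y)g_\eps(y)=(z\eps^2)g_\eps(y),\qquad g_\eps(0)=0,\ g_\eps'(0)=1.
\]
Under $\lambda\eps^2=\theta+\omega\eps+o(\eps)$ and $z\eps^2=o(1)$, this is an infinitesimal perturbation of the limiting problem $-\psi_\theta''+\theta V\psi_\theta=0$ with $\psi_\theta(0)=0$, $\psi_\theta'(0)=1$. A Gronwall-type continuous-dependence estimate applied to the integral form of the Cauchy problem, transferred to $\,^*\R$, then yields $g_\eps=\psi_\theta+\eps\,g_1+o(\eps)$ uniformly on $[0,M]$, where $g_1$ is the unique solution of $-g_1''+\theta V g_1=-\omega V\psi_\theta$ with $g_1(0)=g_1'(0)=0$. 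Unscaling gives
\[
u_{\eps,z}(M\eps)=\eps\psi_\theta(M)+O(\eps^2),\qquad u_{\eps,z}'(M\eps)=\psi_\theta'(M)+\eps\,g_1'(M)+o(\eps).
\]

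The dichotomy of the theorem now follows from the standard part of $u_{\eps,z}(M\eps)/u_{\eps,z}'(M\eps)$. If $\theta\notin\Upsilon$, then $\psi_\theta'(M)\neq 0$ and this ratio is infinitesimal: writing $u_{\eps,z}(x)=A\,e^{\ii\sqrt{z}x}+B\,e^{-\ii\sqrt{z}x}$ on $[M\eps,\infty)$ shows that after suitable normalisation $\st(A)=-\st(B)$, so $\st(u_{\eps,z})$ is a multiple of $\sin(\sqrt{z}\,x)$ and the Dirichlet condition survives, giving $\st(H_{\lambda,\eps})=-\Delta_D$. Cases with $\theta\in\Upsilon$ but $\lambda$ not of the stated form (i.e.\ $(\lambda\eps^2-\theta)/\eps$ infinite) are handled by the same expansion with the infinitesimal parameter $\mu:=\lambda\eps^2-\theta$ in place of $\eps\omega$, yielding a ratio $u/u'$ of order $\eps/\mu$ and hence again Dirichlet. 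The interesting case is $\theta\in\Upsilon$ with $\omega\in\R$: here $\psi_\theta'(M)=0$ and $\psi_\theta(M)\neq 0$ (by ODE uniqueness), both matching quantities are of order $\eps$, their ratio has finite nonzero standard part, and the limit carries a genuine Robin condition.

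To identify $\alpha$ in the critical case I multiply $-g_1''+\theta V g_1=-\omega V\psi_\theta$ by $\psi_\theta$, integrate over $[0,M]$, and integrate by parts twice; the two $\theta V$ terms cancel via $\psi_\theta''=\theta V\psi_\theta$, and the boundary contributions collapse thanks to $\psi_\theta(0)=g_1(0)=g_1'(0)=0$ and $\psi_\theta'(M)=0$, leaving
\[
g_1'(M)\,\psi_\theta(M)=\omega\int_0^M V(t)\,\psi_\theta^2(t)\,dt.
\]
Hence $\st\bigl(u_{\eps,z}'(M\eps)/u_{\eps,z}(M\eps)\bigr)=\omega\,\psi_\theta(M)^{-2}\int_0^M V\psi_\theta^2\,dt$, which is exactly the $\alpha$ of \eqref{alphafinale}; substituting into the Wronskian formula identifies $\st(G_{\lambda,\eps,z})$ with the Green's function of $-\Delta_\alpha-z$. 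The main obstacle is the rigorous justification of the perturbative expansion of $g_\eps$ together with the $\eps$-uniform $L^2_y$ bounds on $G_{\lambda,\eps,z}$ (supplied by the exponential decay in $\sqrt{z}$), needed to upgrade the pointwise standard-part identity to the resolvent identity of Definition~\ref{de:ns}; once those estimates are in place, the identification of $\alpha$ reduces to the two-line integration by parts above.
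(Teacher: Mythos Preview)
Your approach is essentially the same as the paper's: both build the resolvent via the Wronskian kernel, rescale the Dirichlet solution $u_{\eps,z}$ so that the problem becomes an infinitesimal perturbation of $-\psi_\theta''+\theta V\psi_\theta=0$, and then read off $\alpha$ as the standard part of $u'/u$ at the matching point. The one noteworthy difference is the computation of the first-order correction: the paper obtains $\partial_\theta\psi_\theta'(M)$ by solving the linearized equation via variation of constants and then invoking the Wronskian identity $\psi_\theta(M)\tilde\psi_\theta'(M)=-1$, whereas your integration of the $g_1$-equation against $\psi_\theta$ collapses directly to $g_1'(M)\psi_\theta(M)=\omega\int_0^M V\psi_\theta^2$, which is a slightly cleaner route to the same formula.
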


\begin{remark}\label{deca} Notice that the hypothesis $|\lambda|=O\left(\varepsilon^{-2}\right)$ entails that $\lambda$ has the form $\theta\eps^{-2}+S(\eps)$, where $S(\eps)=o(\eps^{-2})$ and $\theta\in\mathbb{R}$. In fact, by our hypothesis we have
\[I_{\lambda,\varepsilon}:=\big\{C\in\mathbb{R}_{\geq 0}\mid \eps^2\,|\lambda|\leq C\big\}\neq \emptyset,\] and the thesis is reached by taking $\theta=\operatorname{sgn}(\lambda)\cdot\inf I_{\lambda,\varepsilon}$.
\end{remark}
\begin{remark}
Observe that, if $\theta\in\Upsilon$, we necessarily have $\psi_{\theta}(M)\neq 0$, whence $\alpha\neq +\infty$. This implies that $\st(H_{\lambda,\eps})$ is a Schr\"odinger operator with a non-trivial point interaction at the origin. Moreover, since
$$\int_0^MV(t)\psi_{\theta}^2(t)dt=-\frac{1}{\theta}\int_0^M(\psi_{\theta}')^2(t)dt\neq 0,$$
we deduce that, by varying the parameter $\omega$, we obtain all the possible values $\alpha\in\R$. 

Furthermore, $\alpha$ generally depends not only on $\omega$, but on $\theta$ as well through $\psi_\theta$. This should be compared with the result in \cite{Albeverio-Fenstad-HoeghKrohn-1979_singPert_NonstAnal}, as the dependence on $\theta\in\Upsilon$ vanishes for the square potential $V = \mathbbm{1}_{[0,1]}$: in this case, in fact, one has $\alpha = \frac{\omega}{2}$.
\end{remark}

As a consequence of Theorem \ref{main}, we deduce a nonstandard formulation of the general approximation theorem for Schr\"odinger operators with point interactions in $\R^3$ \cite{albeverio-solvable,MS-2018-shrinking-and-fractional}, in the case of radial, compactly supported potentials. 

In order to introduce the result, let us first recall some well-known facts about delta-like interactions in three dimensions (see e.g.~\cite[Chapter I.1]{albeverio-solvable}). As anticipated in the Introduction, the symmetric operator $-\Delta|_{\mathcal{C}^{\infty}_c(\R^3\setminus\{0\})}$ has a one-parameter family of self-adjoints extensions, that we denote by $\big\{\!-\!\Delta^{(3)}_{\alpha}\big\}$, $\alpha\in\R\cup\{\infty\}$, in order to avoid confusion with the 1D case. The extension with $\alpha=\infty$ coincides with the free Laplacian  on $\R^3$. Let us consider the angular momenta decomposition 
\begin{equation}\label{finde}
L_{\mathrm{loc}}^2(\mathbb{R}^3)=\bigoplus_{\ell=0}^{\infty}U^{-1}L_{\mathrm{loc}}^2(\mathbb{R}^+,rdr)\otimes \langle Y_{\ell,-\ell},\ldots ,Y_{\ell,\ell}\rangle,
\end{equation}
where $(Uf)(r)=rf(r)$, and $Y_{\ell,m},\,\ell\in\mathbb{N},\,m=0,\pm 1,\ldots, \pm\ell$, are the spherical harmonics on $L^2(S^2)$. With respect to this decomposition, the operator $-\Delta^{(3)}_{\alpha}$ writes as
\begin{equation}\label{de:se}
-\Delta^{(3)}_{\alpha}=\Big(-\Delta_{\alpha}\oplus\bigoplus_{\ell=1}^{\infty}U^{-1}H_{\ell}U\Big)\otimes 1,
\end{equation}
where $H_{\ell}$, $\ell\geq 1$, are self-adjoint operators on $L^2(\R^+,rdr)$, independent on $\alpha$. Explicitly, we have
\begin{equation}\label{HL}
\begin{gathered}
\mathcal{D}(H_{\ell})=\{f\in L^2(\R^+)\,|\,f,f'\in W_{\mathrm{loc}}^{1,1}(\R^+), \; -d_r^2f + \ell(\ell+1)r^{-2}f \in L^2(\R^+)\},\\
\quad H_{\ell} f=-d_r^2f + \ell(\ell+1)r^{-2}f.
\end{gathered}
\end{equation}
Identity \eqref{de:se} tells us that $-\Delta^{(3)}_{\alpha}$ diagonalizes with respect to decomposition \eqref{finde}, and that it coincides with $-\Delta$ after projecting out the subspace of radial functions. 

Next, let us fix a radial, compactly supported potential $W\in L^{\infty}(\R^3,\R)$, and consider the operator $-\Delta+W$, which admits a unique self-adjoint realization on $L^2(\R^3)$ (see e.g.~\cite{Sch}). We recall the notion of zero energy resonance for $-\Delta+W$ (see e.g.~\cite{MS-2018-shrinking-and-fractional} and references therein), which corresponds to the presence of a suitable distributional eigenfunction.
\begin{definition}
The operator $-\Delta+W$ is zero energy resonant if there exists a function $\Psi\in L^2(\R^3,\langle x\rangle^{-1-\delta}dx)\setminus L^2(\R^3)$, for any $\delta>0$, such that
$$(-\Delta+W)\Psi=0$$
as a distributional identity. The function $\Psi$ is called a zero-energy resonance.
\end{definition}

Now, for every $\omega,\theta\in\R$, and $\eps\in\mathstrut^*\R$ positive and infinitesimal, consider the operator
$$A_{\eps,\theta,\omega}:=-\Delta+\frac{\theta+\omega\eps}{\eps^2}W\Big(\frac{x}{\eps}\Big),$$
which is self-adjoint on $^*\! L^2(\R^3)$ by means of the transfer principle, as observed in Section \ref{NSA}. That $A_{\eps,\theta,\omega}$ is near standard is a direct consequence of Theorem \ref{main}, but we can say more: $\st\left(A_{\eps,\theta,\omega}\right)$ is non-trivial if and only if $-\Delta+\theta W$ is zero energy resonant, in a sense that is made precise by the following Corollary. 
\begin{corollary}\label{cor:app}
Let us fix a radial, compactly supported potential $W\in L^{\infty}(\R^3,\R)$, and write $W(|x|):=V(x)$, where $\supp V\subseteq [0,M]$ for some $M>0$. The self-adjoint operator $A_{\eps,\theta,\omega}$ is near standard. Moreover, we have the following dichotomy.
\begin{itemize}
\item[(i)] If $-\Delta+\theta W$ is not zero-energy resonant, then $\st(A_{\eps,\theta,\omega})=-\Delta$.
\item[(ii)] If $-\Delta+\theta W$ has a zero-energy resonance $\Psi_{\theta}$, then $\st(A_{\eps,\theta,\omega})=-\Delta^{(3)}_{\alpha}$, with 
 $$\alpha=\frac{\omega}{\psi_{\theta}(M)^2}\int_0^MV(t)\psi_{\theta}^2(t)dt\neq\infty,$$
where the function $\psi_{\theta}:\R^+\to\R$ is defined by
\begin{equation}\label{psi_ris}
\psi_{\theta}(|x|):=|x|(\mathbb{P}_{0}\Psi_{\theta})(x
),\quad x\in\R^3.
\end{equation}
\end{itemize}
\end{corollary}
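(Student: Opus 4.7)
The plan is to exploit the radiality of $W$ to decompose $A_{\eps,\theta,\omega}$ along angular momenta and reduce everything to Theorem~\ref{main}. Transferring the decomposition~\eqref{finde}, and using that the radial potential $\frac{\theta+\omega\eps}{\eps^2}W(x/\eps)$ commutes with the angular projections, $A_{\eps,\theta,\omega}$ preserves each angular momentum sector. After conjugating by the unitary $U$, on the $\ell=0$ sector it reduces to $H_{\lambda,\eps}$ on $^{\ast}\! L^2(\R^+)$ with $\lambda=\theta/\eps^2+\omega/\eps$, which has exactly the form required by Theorem~\ref{main} (with a vanishing $o(\eps^{-1})$ remainder); on each sector $\ell\geq 1$ it reduces instead to $H_\ell+\lambda V(r/\eps)$ on $^{\ast}\! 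L^2(\R^+)$.

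For $\ell\geq 1$, I claim $\st(H_\ell+\lambda V(r/\eps))=H_\ell$. The point is that $H_\ell$ is already essentially self-adjoint on $\mathcal{C}^{\infty}_c(\R^+)$, because the centrifugal barrier $\ell(\ell+1)/r^2$ puts the origin in the limit-point case, so no nontrivial boundary perturbation at $0$ is admissible. Concretely, a Birman--Schwinger or Krein-type resolvent identity for $(H_\ell+\lambda V(r/\eps)-z)^{-1}-(H_\ell-z)^{-1}$ involves factors $\psi(r)\sim r^{\ell+1}$ near the origin for $\psi\in\mathcal{D}(H_\ell)$, which integrated against $V(r/\eps)\mathbbm{1}_{[0,M\eps]}$ are suppressed by a factor of order $\eps^{2\ell+3}$; this dominates the $|\lambda|=O(\eps^{-2})$ growth for every $\ell\geq 1$, so the resolvent difference is infinitesimal on standard vectors. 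Combining this with Theorem~\ref{main} on the radial sector and with the identity~\eqref{de:se} then yields $\st(A_{\eps,\theta,\omega})=-\Delta$ when $\theta\notin\Upsilon$, and $\st(A_{\eps,\theta,\omega})=-\Delta^{(3)}_\alpha$ with the $\alpha$ of~\eqref{alphafinale} when $\theta\in\Upsilon$.

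It then remains to identify $\Upsilon$ with $\{\theta\in\R:-\Delta+\theta W\text{ is zero-energy resonant}\}$ and to check that the half-line solution $\psi_\theta$ appearing in Theorem~\ref{main} coincides with the function defined by~\eqref{psi_ris}. Given a resonance $\Psi_\theta$, the radiality of $W$ implies that $\mathbb{P}_0\Psi_\theta$ is again a distributional solution of $(-\Delta+\theta W)u=0$; expanding $\Psi_\theta$ along angular momenta and using that harmonic functions on $\{|x|>M\}$ with nontrivial $\ell\geq 1$ component decay like $r^{-\ell-1}$ and hence belong to $L^2(\R^3)$, one sees that the obstruction to $\Psi_\theta\in L^2(\R^3)$ must come entirely from the radial part, so $\mathbb{P}_0\Psi_\theta\neq 0$. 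Setting $\psi_\theta(r):=r(\mathbb{P}_0\Psi_\theta)(r)$ gives a nontrivial solution of $-\psi_\theta''+\theta V\psi_\theta=0$ on $\R^+$, which is affine on $[M,\infty)$; the weighted integrability $\Psi_\theta\in L^2(\langle x\rangle^{-1-\delta}dx)$ for every $\delta>0$ forces the affine tail to be constant, i.e.~$\psi_\theta'(M)=0$, hence $\theta\in\Upsilon$, and the converse runs backwards via $\Psi_\theta(x):=\psi_\theta(|x|)/|x|$.

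The main obstacle I anticipate is the resolvent estimate in the second paragraph: the critical scaling $|\lambda|=O(\eps^{-2})$ is exactly the one that produces nontrivial behaviour in the $\ell=0$ sector, so one has to argue carefully that the centrifugal barrier genuinely suppresses the same mechanism for every $\ell\geq 1$, rather than invoking a general ``short-range'' heuristic.
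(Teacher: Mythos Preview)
Your approach is essentially the paper's: decompose $A_{\eps,\theta,\omega}$ along angular momenta, apply Theorem~\ref{main} on the $\ell=0$ sector, and then identify the set $\Upsilon$ with the zero-energy resonance condition via the affine tail of $\psi_\theta$ on $[M,\infty)$. Your resonance argument in the last paragraph matches the paper's almost step for step, including the observation that the $\ell\geq 1$ angular components of $\Psi_\theta$ must lie in $L^2(\R^3)$ and hence cannot carry the non-$L^2$ obstruction.

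There is one genuine difference. The paper simply writes the factorization~\eqref{pigra} with bare $^{\ast}H_\ell$ on the $\ell\geq 1$ sectors, i.e.\ it drops the potential term $\lambda V(r/\eps)$ there without comment. You correctly note that the radial potential \emph{does} act on every sector and supply the missing estimate: the centrifugal barrier forces $(H_\ell-z)^{-1}f$ to vanish like $r^{\ell+1}$ at the origin, so via the second resolvent identity and the uniform bound $\|(H_\ell+\lambda V_\eps-z)^{-1}\|\leq |\mathrm{Im}\,z|^{-1}$ the perturbation is infinitesimal on standard vectors for each $\ell\geq 1$. Your power count is right (the squared $L^2$-norm over $[0,M\eps]$ scales as $\eps^{2\ell+3}$, giving an overall $O(\eps^{\ell-1/2})$ after multiplying by $|\lambda|=O(\eps^{-2})$), and the anticipated obstacle you flag is exactly the point the paper leaves implicit.

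Two small items to tighten: (a) you should say a word about why $\psi_\theta(r)=r\,\mathbb{P}_0\Psi_\theta(r)$ has enough regularity to be the $W^{2,\infty}_{\mathrm{loc}}$ solution required by Theorem~\ref{main}; the paper invokes elliptic regularity plus a radial Sobolev/Morrey embedding here; (b) when you assert that the $\ell\geq 1$ harmonic tails ``decay like $r^{-\ell-1}$'', make explicit that the growing branch $r^{\ell}$ is excluded by the weighted-$L^2$ hypothesis $\Psi_\theta\in L^2(\langle x\rangle^{-1-\delta}dx)$, as the paper does.
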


We will see in the proof of Corollary \ref{cor:app} that the coupling parameters $\theta\in\Upsilon(V)$ are exactly those for which the unscaled operator $-\Delta+\theta W$ has a zero energy-resonance.

\section{Proof of the main results}\label{PMR}
In this Section we prove our main results, namely Theorem \ref{main} and Corollary \ref{cor:app}, by adapting the nonstandard techniques of \cite{Albeverio-Fenstad-HoeghKrohn-1979_singPert_NonstAnal} to our more general case.

Let us start with some preliminary discussion. Without loss of generality, we can assume that $\supp V\subseteq[0,1]$. For any fixed $z\in\C$, and for any given $\theta,\gamma\in\R$, we consider the unique solution $\psi_{\theta,\gamma}\in W_{\mathrm{loc}}^{2,\infty}(\R^+)\hookrightarrow\mathcal{C}^1(\R^+)$ to the Cauchy problem
\begin{equation}\label{depending-map}
\begin{cases}
-\psi_{\theta,\gamma}''+(\theta V-\gamma z)\psi_{\theta,\gamma}=0,\\
\psi_{\theta,\gamma}(0)=0\\
\psi_{\theta,\gamma}'(0)=1.
\end{cases}
\end{equation}
In particular, $\psi_{\theta}:=\psi_{\theta,0}$ is a non-zero solution to $(-\Delta_D+\theta V)\psi_{\theta}=0$. Moreover, it is easy to show that the map $(\theta,\gamma)\mapsto\psi_{\theta,\gamma}$ belongs to $\mathcal{C}^1(\R^2, W^{2,\infty}_{\mathrm{loc}}(\R^+))$.

Next, for every fixed $\eps\in(0,1]$ and $\lambda\in\R$, let us consider the self-adjoint operator $H_{\lambda,\eps}$ on $L^2(\R^+)$, defined by \eqref{def_H}. In order to characterize the resolvent map $(H_{\lambda,\eps}-z)^{-1}$, for $z\in\C\setminus\R$, we analyze the following equation:
\begin{equation}\label{Schr}
-\phi''(x)+\lambda V\Big(\frac{x}{\eps}\Big)\phi(x)-z\phi(x)=0,\quad x\in\R^+.
\end{equation}
Let us denote by $u_{\lambda,\eps},v_{\lambda,\eps}\!\in \mathcal{C}^{1}(0,\eps)$ the solutions to \eqref{Schr} on $[0,\eps]$, respectively with $u_{\lambda,\eps}(0)=0$, $u'_{\lambda,\eps}(0)=1$ and $v_{\lambda,\eps}(1)=0$, $v'_{\lambda,\eps}(0)=0$. Observe that, for $x\in[0,\eps]$,
\begin{equation}\label{scaling}
u_{\lambda,\eps}(x)=\eps\psi_{\eps^2\lambda,\eps^2}\Big(\frac{x}{\eps}\Big).
\end{equation}

Let $\phi_1,\phi_2\in\mathcal{C}^1(\R^+)$ be solutions to \eqref{Schr}, respectively with $\phi_1(0)=0$ and $\phi_2(x)\to 0$ as $x\to\infty$. Explicitly, we consider
\begin{equation}\label{phiuno}
\phi_1(r):=\begin{cases}
u_{\lambda,\eps}(x),&x\in[0,\eps]\\
ae^{\sqrt{z}x}+be^{-\sqrt{z}x},&x>\eps,
\end{cases}
\end{equation}

\begin{equation}\label{phidue}
\phi_2(r):=\begin{cases}
cu_{\lambda,\eps}(x)+dv_{\lambda,\eps}(x),&x\in[0,\eps]\\
e^{-\sqrt{z}x},&x>\eps,
\end{cases}
\end{equation}
where the constants $a,b,c,d\in\C$ are determined by imposing the differentiability of $\phi_1$ and $\phi_2$ at $x=\eps$. For a later convenience, we write down the expressions for $a$ and $b$:
\begin{equation}\label{expr:ab}
a=\frac{1}{2}e^{-\sqrt{z}\eps}\left(u_{\lambda,\eps}(\eps)+\frac{u_{\lambda,\eps}'(\eps)}{\sqrt{z}}\right),\quad b=\frac{1}{2}e^{\sqrt{z}\eps}\left(u_{\lambda,\eps}(\eps)-\frac{u_{\lambda,\eps}'(\eps)}{\sqrt{z}}\right).
\end{equation}

By Sturm--Liouville theory, the Wronskian $K:=W(\phi_1,\phi_2)$ is independent on $x$, and a direct computation shows that actually $K=-2a\sqrt{z}$. In addition, since $z\not\in\sigma(H_{\lambda,\eps})$, we have that $K\neq 0$, and the integral kernel of $(H_{\lambda,\eps}-z)^{-1}$ is given explicitly by
\begin{equation}\label{eq:res_ker}
G_z(x,y)=\frac{1}{K}\,\begin{cases}
\phi_1(x)\phi_2(y),&x\leq y,\\
\phi_1(y)\phi_2(x),&x\geq y.
\end{cases}
\end{equation}

Let us fix now $\eps,\lambda(\eps)\in\,\!^*\R$, with $\eps$ positive and infinitesimal and $|\lambda|=O(\eps^{-2})$. Owing to the transfer principle, we have that $H_{\lambda,\eps}$ is a self-adjoint operator on $^*\! L^2(\R^+)$, and $G_z$ is the integral kernel of its resolvent. More precisely, for every\footnote{By transfer, this property would actually hold more in general for every $z\in\,\!^{\ast}\!\left(\C\setminus\R\right)$.} $z\in\C\setminus\R$ and for every $g\in\,\!^*\! L^2(\R^+)$, we have
$$ 
\big((H_{\lambda,\eps}-z)^{-1}g\big)(x)=\int_{^*\R^+}G_z(x,y)g(y)dy.
$$
In particular, for every $x\in\,\!\!^*\R$, with $x\geq\eps$, and for every $f\in L^2(\R^+)$, formulas \eqref{phiuno}, \eqref{phidue} and \eqref{eq:res_ker} yield
\begin{equation}\label{exp_kernel}
\begin{split}
&\big((H_{\lambda,\eps}-z)^{-1}\,^*\! f\big)(x)=\frac{e^{-\sqrt{z}x}}{K}\Bigg(\int_0^{\eps}u_{\lambda,\eps}(y)\,^*\! f(y)dy\,\\
&\quad+\int_{\eps}^x(ae^{\sqrt{z}y}+be^{-\sqrt{z}y})\,^*\! f(y)dy+(ae^{2\sqrt{z}x}+b)\int_x^{+\infty}e^{-\sqrt{z}y}\,^*\! f(y)dy\Bigg).
\end{split}
\end{equation}

In the following Lemma, we show the restriction to $\R^+$ of the resolvent map \eqref{exp_kernel} defines a standard operator.

\begin{lemma}\label{le:stare}
For every $z\in\C\setminus\R$, the map
$$\Phi_z:f\mapsto\st\big((H_{\lambda,\eps}-z)^{-1}\,^*\! f\big)|_{\R^+}$$
defines a bounded linear operator on $L^2(\R^+)$.
\end{lemma}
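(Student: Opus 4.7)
The plan is to combine the transfer of the self-adjoint resolvent bound with the explicit kernel \eqref{exp_kernel} and a Fatou-type passage to the standard part.

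To start, since $H_{\lambda,\eps}$ is self-adjoint on $^*\! L^2(\R^+)$ by the transfer recorded in Section~\ref{NSA}, the classical bound $\|(A-w)^{-1}\|\leq 1/|\im w|$ for self-adjoint $A$ transfers to give
\[
\|h\|_{{}^*\! L^2(\R^+)}\leq\frac{\|f\|_{L^2(\R^+)}}{|\im z|},\qquad h:=(H_{\lambda,\eps}-z)^{-1}\,{}^*\! f,
\]
with the right-hand side a finite real number. This uniform bound is the seed of the eventual $L^2$-estimate on $\Phi_z(f)$.

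Next, to extract a pointwise standard part, I would verify that $h(x)$ is a finite hyperreal at every standard $x>0$ (automatically exceeding the infinitesimal $\eps$), using the explicit representation \eqref{exp_kernel}. The scaling identity \eqref{scaling}, the $\mathcal{C}^1$-regularity of $(\theta,\gamma)\mapsto\psi_{\theta,\gamma}$ noted after \eqref{depending-map}, and the hypothesis $|\lambda|=O(\eps^{-2})$ (cf.\ Remark~\ref{deca})---which makes $\eps^2\lambda$ finite with standard part some $\theta\in\R$---together give
\[
u_{\lambda,\eps}(\eps)\sim\eps\,\psi_\theta(1),\qquad u_{\lambda,\eps}'(\eps)\sim\psi_\theta'(1).
\]
Feeding these into \eqref{expr:ab}, either $\psi_\theta'(1)\neq 0$---so that $|K|$ has positive standard part and $|a|,|b|=O(1)$---or $\psi_\theta'(1)=0$ (the \textbf{resonant} case), in which $|K|,|a|,|b|$ are all of order $\eps$. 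Coupling these magnitude estimates with Cauchy--Schwarz on the three integrals in \eqref{exp_kernel} yields the finiteness of $h(x)$ for every $x\in\R^+$, so $\Phi_z(f)(x):=\st(h(x))$ is well-defined on $\R^+$; linearity in $f$ is then immediate.

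For the operator-norm bound, a Fatou-type lower-semicontinuity argument gives, for every $R>0$,
\[
\int_0^R|\Phi_z(f)(x)|^2\,dx\leq\st\!\left(\int_0^R|h(x)|^2\,dx\right)\leq\st\bigl(\|h\|_{{}^*\! L^2(\R^+)}^2\bigr)\leq\frac{\|f\|_{L^2(\R^+)}^2}{|\im z|^2},
\]
and letting $R\to\infty$ delivers $\|\Phi_z(f)\|_{L^2(\R^+)}\leq\|f\|_{L^2(\R^+)}/|\im z|$. The main obstacle is the resonant case above: since $K=O(\eps)$, finiteness of $h(x)$ relies on an exact cancellation between the infinitesimal smallness of $K$ and matching vanishings in the numerator of \eqref{exp_kernel}. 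Tracking this cancellation needs a careful Taylor expansion of $\psi_{\eps^2\lambda,\eps^2}$ around $(\theta,0)$, drawing on the $\mathcal{C}^1$-regularity above together with the transfer of Taylor's formula flagged in the Fact of Section~\ref{NSA}.
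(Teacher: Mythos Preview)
Your route shares the paper's core---analyze the explicit kernel \eqref{exp_kernel} and split on whether $\psi_\theta'(1)$ vanishes---but differs in how the $L^2$-bound is extracted: the paper simply reduces the lemma to showing that $K^{-1}b$ and $K^{-1}u_{\lambda,\eps}(y)$ (for $y\in[0,\eps]$) are finite, after which $\st(h)|_{\R^+}$ has an explicit form that is manifestly bounded on $L^2$. Your transferred resolvent bound plus Fatou is a nice conceptual alternative, but the Fatou step $\int_0^R|\st(h)|^2\,dx\le\st\!\big(\int_0^R|h|^2\,dx\big)$ compares a standard Lebesgue integral with a $^*$-integral and is not automatic; making it rigorous would require Loeb-measure machinery, which the paper's direct reading of the explicit formula avoids entirely.

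More importantly, there is a genuine gap in your resonant case. From the first-order approximation $u'_{\lambda,\eps}(\eps)\sim\psi'_\theta(1)=0$ you can only conclude $u'_{\lambda,\eps}(\eps)=o(1)$, not $O(\eps)$. Writing $\lambda=\theta\eps^{-2}+S(\eps)$ with $S(\eps)=o(\eps^{-2})$ as in Remark~\ref{deca}, the next Taylor term gives $u'_{\lambda,\eps}(\eps)\approx \eps^2S(\eps)\,\partial_\theta\mathcal G(\theta,0)$, which for generic $S(\eps)$ (e.g.\ $S(\eps)=\eps^{-3/2}$) is $\Theta(\eps^{1/2})$, strictly larger than $\eps$. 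So your claim that $|K|,|a|,|b|$ are ``of order $\eps$'' is false in general; they are $\Theta(\max\{\eps,\eps^2|S(\eps)|\})$. What one actually needs is that $|K|$ and $|b|$ have the \emph{same} order, and the paper secures this by (i) expanding $u'_{\lambda,\eps}(\eps)$ to the next order, (ii) computing $\partial_\theta\mathcal G(\theta,0)$ explicitly via variation of constants and verifying it is nonzero (equation~\eqref{gnonzero}), and crucially (iii) observing that since $u_{\lambda,\eps}(\eps),u'_{\lambda,\eps}(\eps)\in\,^*\R$ while $\sqrt z\notin\R$, the two summands in \eqref{expr:ab} cannot cancel. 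You flag that a careful Taylor expansion is needed, but your stated intermediate orders are wrong, and without the no-cancellation observation (iii) the finiteness of $K^{-1}b$ does not follow.
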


\begin{proof}
Analyzing the identity \eqref{exp_kernel}, and observing that $K^{-1}a=-(4z)^{-1/2}$ is finite, we easily deduce that it is sufficient to show that $K^{-1}b$ and $K^{-1}u_{\lambda,\eps}(y)$ are finite, for every $y\in[0,\eps]$. 

Let us write $\lambda=\theta\eps^{-2}+S(\eps)$, with $\theta\in\R$ and $S(\eps)=o(\eps^{-2})$ (such decomposition is indeed possible, see Remark \ref{deca}). We have the following expansions, which follow immediately from the representation \eqref{scaling}, Taylor's formula and the fact that the map $(\theta,\gamma)\mapsto\psi_{\theta,\gamma}$ belongs to $\mathcal{C}^1(\R^2, W_{\mathrm{loc}}^{2,\infty}(\R^+))$.
\begin{eqnarray}
\label{exp_G} u_{\lambda,\eps}(y) &=& \eps\psi_{\varepsilon^{2}\lambda,\varepsilon^{2}}(\eps^{-1}y)\\
\nonumber &=& \eps\psi_{\theta,0}(\eps^{-1}y)+o(\eps), \quad \forall\,y\in[0,\eps], \\
\label{exp_der_G} u'_{\lambda,\eps}(\eps) &=&\psi'_{\varepsilon^{2}\lambda,\varepsilon^{2}}(1) \\
\nonumber &=&\psi_{\theta,0}'(1)+\eps^2S(\eps)\frac{\partial\mathcal{G} }{\partial \theta}(\theta,0)+O(\max\{(\eps^2S(\eps))^2,\eps^2\}),
\end{eqnarray}
where we introduced the $\mathcal{C}^1$-map
$$\mathcal{G}:\R\times\R\to\R,\quad\mathcal{G}(\lambda,\gamma)=\psi_{\lambda,\gamma}(1).$$

Now we distinguish two cases.\\

\textbf{Case I:} $\theta\not\in\Upsilon$. It follows from \eqref{exp_G} that $u_{\lambda,\eps}(y)=O(\eps)$ for every $y\in[0,\eps]$. Moreover, $u_{\lambda,\eps}'(\eps)=\Theta(1)$, as $\psi_{\theta,0}'(1)\neq 0$ (whilst all other summands in expansion (\ref{exp_der_G}) are infinitesimal). Hence, we deduce from \eqref{expr:ab} that $b$ is finite and $K$ is not infinitesimal. We conclude that $K^{-1}b$ and $K^{-1}u_{\lambda,\eps}(y)$, $y\in[0,\eps]$, are finite.\\

\textbf{Case II:} $\theta\in\Upsilon$. Since $\psi'_{\theta,0}(1)=0$, expansion \eqref{exp_der_G} reduces to
\begin{equation}\label{eq:nuova}
u'_{\lambda,\eps}(\eps)=\eps^2S(\eps)\frac{\partial\mathcal{G} }{\partial \theta}(\theta,0)+O(\max\{(\eps^2S(\eps))^2,\eps^2\}).
\end{equation}
Let us compute explicitly $(\partial_{\theta}\mathcal{G})(\theta,0)$. We have $(\partial_{\theta}\mathcal{G})(\theta,0)=g'(1)$, where $g$ satisfies
\begin{equation*}
-g''+\theta Vg+V\psi_{\theta}=0,
\end{equation*}
with initial conditions $g(0)=g'(0)=0$. Let $\tilde{\psi}_{\theta}\in\mathcal{C}^1(\R^+)$ be the solution to the Cauchy problem 
\begin{equation*}
\begin{cases}
-\tilde{\psi}_{\theta}''+\theta V\tilde{\psi}_{\theta}=0,\\
\tilde{\psi}_{\theta}(0)=1\\
\tilde{\psi}_{\theta}'(0)=0.
\end{cases}
\end{equation*}
Using $W(\psi_{\theta},\tilde{\psi}_{\theta})=-1$, we can apply the variation of constants to get
$$g(x)=\Big(\int_0^xV(t)\tilde{\psi}_{\theta}(t)\psi_{\theta}(t)dt\Big)\psi_{\theta}(x)-\Big(\int_0^xV(t)\psi_{\theta}(t)^2dt\Big)\tilde{\psi}_{\theta}(x).$$
Taking into account the values of $\psi_{\theta},\tilde{\psi}_{\theta}$ at $x=0$ and the condition $\psi'_{\theta}(1)=0$, we obtain
\begin{eqnarray}
\nonumber \frac{\partial\mathcal{G} }{\partial \theta}(\theta,0) &=& g'(1) \\
\nonumber &=& -\Big(\int_0^1V(t)\psi_{\theta}(t)^2dt\Bigr)\tilde{\psi}'_{\theta}(1).
\end{eqnarray}
Finally, the identity $\psi_{\theta}(1)\tilde{\psi}'_{\theta}(1)=W(\psi_{\theta},\tilde{\psi}_{\theta})=-1$ yields
\begin{equation}\label{eq:derivative_G}
\frac{\partial\mathcal{G} }{\partial \theta}(\theta,0)=\frac{1}{\psi_{\theta}(1)}\Big(\int_0^1V(t)\psi_{\theta}(t)^2dt\Big).
\end{equation}
In particular, as $\theta V(t)\tilde{\psi}_{\theta}^{2}=\tilde{\psi}_{\theta}^{\prime\prime}\tilde{\psi}_{\theta}$, an integration by parts yields
\begin{equation}\label{gnonzero}
\frac{\partial\mathcal{G} }{\partial \theta}(\theta,0)=-\frac{1}{\theta\psi_{\theta}(1)}\int_0^1(\psi_{\theta}')^2(t)dt\neq 0.
\end{equation}
Next, observe that $\psi_{\theta,0}(1)\neq 0$, otherwise we would have $\psi_{\theta,0}\equiv 0$. Owing to formulas \eqref{expr:ab}, \eqref{exp_G}, \eqref{eq:nuova}, and using that $\psi_{\theta,0}(1)\in\R\setminus\{0\}$, $\partial_{\theta}\mathcal{G}(\theta,0)\in\R\setminus\{0\}$, $\sqrt{z}\not\in\R$, we obtain
\begin{eqnarray}
\nonumber K &=& -2a\sqrt{z}\\
\nonumber &=& -e^{-\sqrt{z}\eps}\big(\sqrt{z}u_{\lambda,\eps}(\eps)+u_{\lambda,\eps}(\eps)\big)\\
\nonumber &=&-e^{-\sqrt{z}\eps}\Big(\sqrt{z}\eps\psi_{\theta,0}(1)+o(\eps)+\eps^2S(\eps)\frac{\partial\mathcal{G} }{\partial \theta}(\theta,0)+O(\max\{(\eps^2S(\eps))^2,\eps^2\})\Big)\\
\nonumber &=&\Theta(\max\{\eps,\eps^2S(\eps)\}),
\end{eqnarray}
given that no cancellation can occur between the terms $\sqrt{z}\eps\psi_{\theta,0}(1)\not\in\R$ and $\eps^2S(\eps)(\partial_{\theta}\mathcal{G})(\theta,0)\in\R$. Analogously, we deduce that
\begin{eqnarray}
\nonumber b &=& \frac{1}{2}e^{\sqrt{z}\eps}\left(u_{\lambda,\eps}(\eps)-\frac{u_{\lambda,\eps}'(\eps)}{\sqrt{z}}\right)\\
\nonumber &=& \Theta(\max\{\eps,\eps^2S(\eps)\}).
\end{eqnarray}
It follows that $K^{-1}b$ is finite. Moreover, since $u_{\lambda,\eps}(y)=O(\eps)$ for $y\in[0,\eps]$, and $K^{-1}=\big(\Theta(\max\{\eps,\eps^2S(\eps)\})\big)^{-1}=O(\eps^{-1})$, we also obtain that $K^{-1}u_{\lambda,\eps}(y)$ is finite for every $y\in[0,\eps]$. The proof is complete.
\end{proof}

Observe now that, given $f\in\mathcal{C}^{\infty}_c(\R^+)$ and $z\in\C\setminus\R$, we have 
$$(H_{\lambda,\eps}-z)\,^*\! f=\,^*\! ((-\Delta-z)f),$$
or equivalently
$$^*\! f=(H_{\lambda,\eps}-z)^{-1}\,^*\! ((-\Delta-z)f).$$
Taking the standard part and restricting to $\R^+$ we obtain
\begin{equation}\label{moPhi}
\Phi_{z}\left((-\Delta-z)f\right)=f,\qquad\forall\,f\in \mathcal{C}^{\infty}_c(\R^+),
\end{equation}
where $\Phi_z$ is the bounded linear operator defined in Lemma \ref{le:stare}. It follows that $\Phi_{z}=(-\Delta_{\alpha(z)}-z)^{-1}$, for a suitable $\alpha(z)\in\R\cup\{\infty\}$. In order to conclude the proof of the main Theorem, it remains to show that $\alpha(z)$ is actually independent on $z$. To this aim, let us fix $z\in\C\setminus\R$ and an arbitrary $f\in L^2(\R^3)\setminus\{0\}$, and set $g_z:=(H_{\lambda,\eps}-z)^{-1}\,^*\! f$. By construction, $\st(g_z)|_{\R^+}$ belongs to $\mathcal{D}(-\Delta_{\alpha(z)})\setminus\{0\}$, and it follows by representation \eqref{repres_point} that $\alpha(z)=\st\big(g_z'(\eps)/g_z(\eps)\big)$. Owing to formula \eqref{exp_kernel}, a direct computation yields $\st\big(g_z'(\eps)/g_z(\eps)\big)=\st\big(u_{\lambda,\eps}'(\eps)/u_{\lambda,\eps}(\eps)\big)$, whence
\begin{equation}\label{alpha}
\alpha(z)=\st\big(u_{\lambda,\eps}'(\eps)/u_{\lambda,\eps}(\eps)\big).
\end{equation}
Using \eqref{alpha} and the expansions \eqref{exp_G}-\eqref{exp_der_G} we can determine $\alpha(z)$, showing that it is indeed independent on $z\in\C\setminus\R$. Again, we write $\lambda=\theta\eps^{-2}+S(\eps)$, with $\theta\in\R$ and $S(\eps)=o(\eps^{-2})$, and we distinguish two cases.\\

\textbf{Case I:} $\theta\not\in\Upsilon$.  Since $u'_{\lambda,\eps}(\eps)=\Theta(1)$ and $u_{\lambda,\eps}(\eps)=O(\eps)$, we get $\alpha(z)=\infty$ for every $z\in\C\setminus\R$. We conclude that $H_{\lambda,\eps}$ is near standard, in the sense of Definition \ref{de:ns}, with $\st(H_{\lambda,\eps})=-\Delta_D$.\\

\textbf{Case II:} $\theta\in\Upsilon$. We already observed that $\psi_{\theta,0}(1)\neq 0$ and $(\partial_{\theta}\mathcal{G})(\theta,0)\neq 0$, which in view of expansions \eqref{exp_G}-\eqref{exp_der_G} implies $u_{\lambda,\eps}(\eps)=\Theta(\eps)$ and $u'_{\lambda,\eps}(\eps)=\Theta(\eps^2S(\eps))$.
If $S(\eps)=\Theta(\eps^{-\gamma})$ for some $\gamma\in(1,2)$, then $u^{\prime}_{\lambda,\varepsilon}(\varepsilon)/u_{\lambda,\varepsilon}(\varepsilon)$ is of the order of $\varepsilon^{1-\gamma}$, which is infinite, hence by \eqref{alpha} we have that $\alpha(z)=\infty$ for every $z\in\C\setminus\R$. We conclude that  $H_{\lambda,\eps}$ is near standard, with $\st(H_{\lambda,\eps})=-\Delta_D$. 

If $S(\eps)=\omega\eps^{-1}+o(\eps^{-1})$, with $\omega\in\R$, then \eqref{alpha} yields
\begin{equation}\label{prel_a}
\alpha(z)=\frac{\omega}{\psi_{\theta,0}(1)}\Big(\frac{\partial\mathcal{G} }{\partial \theta}(\theta,0)\Big)\equiv\alpha.
\end{equation}
We deduce that $H_{\lambda,\eps}$ is near standard, with $\st(H_{\lambda,\eps})=-\Delta_{\alpha}$. Moreover, combining \eqref{prel_a} with \eqref{eq:derivative_G}, we obtain the desired expression \eqref{alphafinale} for $\alpha$. 

We have completed the proof of Theorem \ref{main}. 

We conclude this Section by proving Corollary \ref{cor:app}.

\begin{proof}[Proof of Corollary \ref{cor:app}]
Consider the operator 
$$H_{\eps,\theta,\omega}:=-\Delta_{D}+\frac{\theta+\omega\eps}{\eps^2}V\left(\frac{x}{\eps}\right),$$
which is near standard by means of Theorem \ref{main}, as $\theta$ and $\omega$ are finite\footnote{To apply Theorem \ref{main}, it would have been sufficient to assume that $\theta+\omega\varepsilon = O(1)$.}, with $\st(H_{\eps,\theta,\omega})=-\Delta_{\alpha}$ for a suitable $\alpha:=\alpha(\theta,\omega)\in\R\cup\{\infty\}$. 

Let us consider the nonstandard decomposition
\begin{equation}\label{findeno}
^\ast L^2(\mathbb{R}^3)=\bigoplus_{\ell=0}^{\infty}U^{-1}\,^\ast L^2(\mathbb{R}^+,rdr)\otimes \langle Y_{\ell,-\ell},\ldots ,Y_{\ell,\ell}\rangle,
\end{equation}
obtained by applying the transfer principle to \eqref{finde}. With respect to decomposition \eqref{findeno}, we have the factorization
\begin{equation}\label{pigra}
\big(A_{\eps,\theta,\omega}-z)^{-1}=\Big(\big(H_{\eps,\theta,\omega}-z)^{-1}\oplus
\bigoplus_{\ell=1}^{\infty}U^{-1}(\,^{\ast}H_{\ell}-z)^{-1}U\Big)\otimes 1,
\end{equation}
for every $z\in\C\setminus\R$, where $H_{\ell}$, $\ell\geq 1$, are the operators defined by \eqref{HL}.

Moreover, we define the extension map $\mathcal{E}_3:L^2(\R^3)\ni f\mapsto\,^*\! f\in\,\!^*\! L^2(\R^3)$, and the restriction map $\mathcal{R}_3:\,\!\!^*\! L^2(\R^3)\ni g\mapsto \st(g)|_{\R^3}$. Analogously, we define the extension map $\mathcal{E}_1:L^2(\R^+)\ni f\mapsto\,^*\! f\in\,\!^*\! L^2(\R^+)$, and the restriction map $\mathcal{R}_1:\,\!\!^*\! L^2(\R^+)\ni g\mapsto \st(g)|_{\R^+}$. Using \eqref{pigra} and the linearity of the extension and restriction maps, we obtain
\begin{equation*}\label{bidol}
\begin{split}
\mathcal{R}_3\big(&A_{\eps,\theta,\omega}-z)^{-1}\,\mathcal{E}_3
=\mathcal{R}_3\,\left(\Big(\big(H_{\eps,\theta,\omega}-z)^{-1}\oplus
\bigoplus_{\ell=1}^{\infty}U^{-1}(\,^{\ast}H_{\ell}-z)^{-1}U\Big)\otimes 1\right)\,\mathcal{E}_3\\
&=\Big(\big(\mathcal{R}_1\big(H_{\eps,\theta,\omega}-z)^{-1}\,\mathcal{E}_1\big)\otimes 1\Big)\oplus\mathcal{R}_3\left(
\bigoplus_{\ell=1}^{\infty}U^{-1}(\,^{\ast}H_{\ell}-z)^{-1}U\otimes 1\right)\mathcal{E}_3\\
&=\Big((-\Delta_{\alpha}-z)^{-1}\oplus
\bigoplus_{\ell=1}^{\infty}U^{-1}(H_{\ell}-z)^{-1}U\Big)\otimes 1,
\end{split}
\end{equation*}
for every $z\in\C\setminus\R$, as an identity with respect to the standard decomposition \eqref{finde}. Comparing the identity above with \eqref{de:se}, we deduce that $A_{\eps,\theta,\omega}$ is near standard, with $\st(A_{\eps,\theta,\omega})=-\Delta^{(3)}_{\alpha}$. It remains to prove the dichotomy.\\

(i) Assume that $-\Delta+\theta W$ is not zero energy resonant. Let $\psi_{\theta}\in\mathcal{C}^1(\R^+)$ be a non-zero solution to $(-\Delta_D+\theta V)\psi_{\theta}=0$, and suppose that $\theta\in\Upsilon(V)$, namely $\psi_{\theta}'(M)=0$. It follows that $\psi_{\theta}(x)=\psi_{\theta}(M)$ for $x\geq M$. Hence the function $\Psi_{\theta}$, defined by $\Psi_{\theta}(x)=|x|^{-1}\psi_{\theta}(|x|)$, belongs to $L^2(\R^3,\langle x\rangle^{-1-\delta})\setminus L^2(\R^3)$, for $\delta>0$, and satisfies $(-\Delta+\theta W)\Psi_{\theta}=0$, namely it is a zero-energy resonance for $-\Delta + \theta W$, yielding a contradiction. We deduce that $\theta\not\in\Upsilon(V)$, whence $\alpha=\infty$. \\

(ii) Assume that $\Psi_{\theta}$ is a zero-energy resonance for $-\Delta+\theta W$. Since $\Psi_{\theta}\in L^2_{\mathrm{loc}}(\R^3)$ and $W\in L^{\infty}(\R^3)$, the relation $(-\Delta+\theta W)\mathbb{P}_0\Psi_{\theta}=0$ implies that $\mathbb{P}_0\Psi_{\theta}\in H^2_{\mathrm{loc}}(\R^3)$. Then the Morrey-Sobolev embedding for radial functions (see e.g.~\cite[Proposition 1.1]{SYY}) guarantees that the function $\psi_{\theta}$, defined by \eqref{psi_ris}, belongs to $W^{1,\infty}_{\mathrm{loc}}(\R^+)$. Moreover,  since $H^2_{\mathrm{loc}}(\R^3)\hookrightarrow L^{\infty}_{\mathrm{loc}}(\R^3)$, the relation $(-\Delta+\theta W)\mathbb{P}_0\Psi_{\theta}=0$ yields also $\Delta\psi_{\theta}\in L^{\infty}_{\mathrm{loc}}(\R^+)$. Combining everything, we conclude that $\psi_{\theta}\in W^{2,\infty}_{\mathrm{loc}}(\R^+)$. 

Next, we consider the representation of $\Psi_{\theta}$ with respect to decomposition \eqref{finde}:
\begin{equation}\label{decophi}
\Psi_{\theta}=\big(U^{-1}\psi_{\theta}\otimes Y_{0}\big)\oplus\bigoplus_{\ell=1}^{\infty}\bigoplus_{m=-\ell}^{\ell}U^{-1}\psi_{\theta}^{(\ell,m)}\otimes Y_{\ell,m},
\end{equation}
for suitable $\psi_{\theta}^{(\ell,m)}\in L_{\mathrm{loc}}^2(\R^+)$. Owing to the relation $(-\Delta+\theta W)\Psi_{\theta}=0$, we obtain that $(-\Delta+\theta V)\psi_{\theta}=0$ and 
\begin{equation}\label{eq:lm}
(H_{\ell}+\theta V)\psi^{(\ell,m)}_{\theta}=0,\quad \ell\geq 1,\,m=-\ell,\ldots, \ell.
\end{equation}
Given that $\supp V\subseteq[0,M]$, equation \eqref{eq:lm} and the characterization \eqref{HL} yield
$$\psi_{\theta}^{(\ell,m)}(r)=A_{\ell,m}r^{\ell+1}+B_{\ell,m}r^{-\ell},\quad r\geq M,$$
for suitable $A_{\ell,m},B_{\ell,m}\in\R$. Since $\Psi_{\theta}\in L^2(\R^3,\langle x\rangle^{-1-\delta}dx)$, we necessarily have $A_{\ell,m}=0$, which implies
$$\left|\big(U^{-1}\psi_{\theta}^{(\ell,m)}\otimes Y_{\ell,m}\big)(x)\right|\lesssim \frac{1}{|x|^{\ell+1}},\quad |x|\geq M.$$
In particular, we have $U^{-1}\psi_{\theta}^{(\ell,m)}\otimes Y_{\ell,m}\in L^2(\R^3)$ for every $\ell\geq 1,\,m=-\ell,\ldots, \ell$. Since $\Psi_{\theta}\not\in L^2(\R^3)$, we deduce from \eqref{decophi} that $U^{-1}\psi_{\theta}\otimes Y_{0}\not\in L^2(\R^3)$, which guarantees that $\psi_{\theta}$ is not identically zero.

Summarizing so far, we have proved that $\psi_{\theta}\in W^{2,\infty}_{\mathrm{loc}}(\R^+)$ is a non-zero function, satisfying the equation $(-\Delta+\theta V)\psi_{\theta}=0$. Suppose now that $\psi_{\theta}'(M)=\beta\neq 0$. It would follow that $\psi_{\theta}(x)=\psi_{\theta}(M)+\beta x$ for $x\geq M$, contradicting the condition $\Psi_{\theta}\in L^2(\R^3,\langle x\rangle^{-1-\delta})$ for $\delta>0$. We deduce that $\theta\in\Upsilon(V)$, and Theorem \ref{main} provides the desired expression for $\alpha$. Moreover, given that $\psi_{\theta}(M)\neq 0$, we obtain $\alpha\neq\infty$.
\end{proof}

\section{Linear potentials}\label{linear}

In this Section, we focus on a family of compactly supported potentials, which are linear in a neighborhood of the origin. More precisely, for $\xi\in\R$, we consider
\begin{equation*}
 V_{\xi}(x) = (1 - \xi x) \mathbbm{1}_{[0,1]}.
\end{equation*}
Hence, given $\eps,\lambda\in\,^*\R$, with $\eps$ being a positive infinitesimal and $|\lambda|=O(\eps^{-2})$, we want study the Schr\"odinger operator
$$H^{(\xi)}_{\lambda,\eps}:=-\Delta_{D}+\lambda V_{\xi}\Big(\frac{x}{\eps}\Big),$$
which is self-adjoint on $^*\! L^2(\R^+)$ and near standard, as we proved in Section \ref{PMR}.

For $\theta\in\R\setminus\{0\}$, let $\psi^{(\xi)}_{\theta}\in\mathcal{C}^1(0,1)$ be the solution to the Cauchy problem 
\begin{equation}\label{problemLin}
\begin{cases}
-\partial_x^2\,\psi^{(\xi)}_{\theta}(x)+\theta V_{\xi}(x)\psi^{(\xi)}_{\theta}(x)=0,\quad x\in\,[0,1],\\[0.12cm]
\psi_{\theta}^{(\xi)}(0)=0,\\[0.12cm]
\partial_x\psi_{\theta}^{(\xi)}(0)=1.
\end{cases}
\end{equation}

In view of Theorem \ref{main}, we only need to consider the case when $\partial_x\psi_{\theta}^{(\xi)}(1)=0$, which defines the set $\Upsilon:=\Upsilon^{(\xi)}$. We distinguish between two cases, with respect to the value of the parameter $\xi$.

\textbf{Case I:} ($\xi=0$). In this case, we recover the square potential, which corresponds to the Schr\"{o}dinger operator
\begin{equation}
    H_\alpha = -\Delta_D + \lambda_\alpha \mathbbm{1}_{[0,\varepsilon]}
\end{equation}
analyzed by Albeverio, Fenstad, and H\o{}egh-Krohn in \cite{Albeverio-Fenstad-HoeghKrohn-1979_singPert_NonstAnal}. We have 
\begin{equation*}
\psi_{\theta}^{(0)}(x) =\frac{e^{\sqrt{\theta}x}-e^{-\sqrt{\theta}x}}{2\sqrt{\theta}}.
\end{equation*}

For a rectangular potential barrier ($\theta > 0$), it is straightforward to check that $\partial_x\psi_{\theta}^{(0)}(1)\neq 0$. When $\theta<0$ (corresponding to a rectangular potential well) we have
$$\partial_x\psi_{\theta}^{(0)}(1)=\cos(\sqrt{|\theta|}),$$
whence the set $\Upsilon^{(0)}$ is given explicitly by
$$\Upsilon^{(0)}=\Big\{-\pi^2\Big(k+\frac12\Big)^2\,|\,k\in\N\Big\}\subset(-\infty,0).$$
Due to Theorem~\ref{main}, $\lambda$ can be expressed as
$$\lambda=\frac{\theta}{\eps^2}+\frac{\omega}{\eps}+o(\eps^{-1}),\quad\theta\in\Upsilon^{(0)},\,\omega\in\R.$$
Then we find that  $\st(H^{(0)}_{\lambda,\eps})=-\Delta_{\alpha}$, with
 $$\alpha=\omega\big(\psi^{(0)}_{\theta}(M)\big)^{-2}\int_{0}^MV_0(t)\big(\psi^{(0)}_{\theta}\big)^2(t)dt=\frac{\omega}{2},$$
and we recover the result in \cite{Albeverio-Fenstad-HoeghKrohn-1979_singPert_NonstAnal}. As was noticed before, in this particular case, $\alpha$ does not depend on $\theta\in\Upsilon^{(0)}$.

\textbf{Case II:} ($\xi\neq 0$). The solution of~(\ref{problemLin}) reads
$$\psi^{(\xi)}_{\theta}(x) = \frac{1}{\xi\sigma_{\theta, \xi}}\frac{\operatorname{Bi}(\sigma_{\theta, \xi}) \operatorname{Ai}(\sigma_{\theta, \xi}(1 - \xi x)) - \operatorname{Ai}(\sigma_{\theta, \xi}) \operatorname{Bi}(\sigma_{\theta, \xi}(1 - \xi x))}{\operatorname{Bi}(\sigma_{\theta, \xi}) \operatorname{Ai}'(\sigma_{\theta, \xi}) - \operatorname{Ai}(\sigma_{\theta, \xi}) \operatorname{Bi}'(\sigma_{\theta, \xi})}, $$
where $\operatorname{Ai}(x)$ and $\operatorname{Bi}(x)$ are the Airy functions~\cite[Chapter 9.1]{Olver1997}, and $\sigma_{\theta, \xi} = \sqrt[3]{\frac{\theta}{\xi^2}}$. Taking into account the value of the Wronskian of the Airy functions, namely $W(\operatorname{Ai}(x),\operatorname{Bi}(x))=\frac{1}{\pi}$, we deduce
\begin{equation}\label{SolutionAiry}
\psi^{(\xi)}_{\theta}(x) = \frac{\pi}{\xi\sigma_{\theta, \xi}}\Bigl[\operatorname{Ai}(\sigma_{\theta, \xi}) \operatorname{Bi}(\sigma_{\theta, \xi}(1 - \xi x))- \operatorname{Bi}(\sigma_{\theta, \xi}) \operatorname{Ai}(\sigma_{\theta, \xi}(1 - \xi x)) \Bigr].
\end{equation}
Imposing the condition $\partial_x\psi^{(\xi)}_{\theta}(1)=0$, we obtain
$$\Upsilon^{(\xi)}=\Big\{\theta\,:\,\operatorname{Ai}(\sigma_{\theta, \xi}) \operatorname{Bi}'(\sigma_{\theta, \xi}(1 - \xi))- \operatorname{Bi}(\sigma_{\theta, \xi}) \operatorname{Ai}'(\sigma_{\theta, \xi}(1 - \xi))=0\Big\}.$$
It is easy to check that $\Upsilon$ is an infinite discrete set. Moreover, for $x>0$,
\begin{equation}\label{sign_Airy}
\frac{\operatorname{Ai}(x)}{\operatorname{Bi}(x)}>0,\qquad \frac{\operatorname{Ai}'(x)}{\operatorname{Bi}'(x)}<0.
\end{equation}
Using \eqref{sign_Airy} we deduce that, for $\xi\leq 1$, $\Upsilon^{(\xi)}\subset(-\infty,0)$. When $\xi>1$ the potential $V_{\xi}$ has a non-zero negative part, whence the set $\Upsilon^{(\xi)}$ contains also positive values.

Now, if $\lambda$ has the form
\begin{equation}
\lambda = \frac{\theta}{\eps^2}+\frac{\omega}{\eps}+o(\eps^{-1}),\quad\theta\in\Upsilon^{(\xi)},\,\omega\in\R,
\end{equation}
then $\st(H^{(\xi)}_{\lambda,\eps})=-\Delta_{\alpha^{(\xi)}}$, where 
 $$\alpha^{(\xi)} = \frac{\omega}{(\psi^{(\xi)}_{\theta}(1))^2}\int_{0}^1 V_{\xi}(x')\big(\psi^{(\xi)}_{\theta}(x')\big)^2 dx'.$$
The integral in the r.h.s. is equal to
$$ \int_{0}^1 V_{\xi}(x')\big(\psi^{(\xi)}_{\theta} (x') \big)^2dx' = \frac{1}{3 \xi^3 \sigma_{\theta, \xi}^3 } \left[ 1 - \sigma_{\theta, \xi} (1-\xi)^2 \Bigl(\frac{\operatorname{Ai}(\sigma_{\theta,\xi})}{\operatorname{Ai}'(\sigma_{\theta,\xi}(1-\xi))}\Bigr)^2\,\right],$$
and by computing the explicit value of $\psi^{(\xi)}_{\theta}(1)$ we obtain
\begin{equation}\label{alphaGeneral}
 \alpha^{(\xi)} = \frac{\omega}{3\xi \sigma_{\theta, \xi}} \Bigl[ \Bigl(\frac{\operatorname{Ai}'(\sigma_{\theta,\xi}(1-\xi))}{\operatorname{Ai}(\sigma_{\theta,\xi})}\Bigr)^2 - \sigma_{\theta, \xi} (1-\xi)^2 \Bigr].
\end{equation}
In particular, for the triangular potential (i.e. $\xi=1$) we have
\begin{equation}
 \alpha^{(1)} = \frac{\omega}{3\sqrt[3]{\theta}}\Bigl(\frac{\operatorname{Ai}'(0)}{\operatorname{Ai}(\sqrt[3]{\theta})}\Bigr)^2.
\end{equation}

We conclude this Section by focusing on what happens when potential $V_\xi$ is an infinitesimal modification of the characteristic function $\mathbbm{1}_{[0,1]}$. Namely, we consider the operator $H^{(\xi)}_{\lambda,\eps}$ in the case when $\xi \in ^*\!\mathbb{R}$ is infinitesimal, where we have $\st\left(V_{\xi}\right)=\mathbbm{1}_{[0,1]}$. Our aim is to show that, if $|\lambda|=O(\eps^{-2})$, then $H^{(\xi)}_{\lambda,\eps}$ is near standard, with $\st(H^{(\xi)}_{\lambda,\eps})=\st(H^{(0)}_{\lambda,\eps})$. In view of Definition \ref{de:ns}, it is enough to show that, for every $f\in L^2(\R^+)$ and $z\in\C\setminus\R$,
\begin{equation}\label{eq:eq:ris}
\st\big((H^{(\xi)}_{\lambda,\eps}-z)^{-1}\,\!^*\! f\big)\,|_{\R^+}=\st\big((H^{(0)}_{\lambda,\eps}-z)^{-1}\,\!^*\! f\big)\,|_{\R^+}.
\end{equation}
By means of the discussion in Section \ref{PMR}, this is equivalent to prove the following identities:
\begin{gather}
\label{id_no_der}\st\big(\psi^{(\xi)}_{\theta}(x)\big)=\psi^{(0)}_{\theta}(x),\quad\forall\,x\in[0,1],\,\forall\,\theta<0,\\
\label{id_der}\st\big(\partial_x\psi^{(\xi)}_{\theta}(1)\big)=\partial_x\psi^{(0)}_{\theta}(1),\quad\forall\,\theta\in\R\setminus\{0\}.
\end{gather}

Let us prove explicitly the identity \eqref{id_no_der}, the computations for \eqref{id_der} being similar. We use the following asymptotic expansions of the Airy functions, valid as $x\rightarrow +\infty$ (see e.g. \cite[Chapter 9.7]{Olver1997}).
\begin{eqnarray}
\label{Airyuno}\mathrm{Ai}\left(-x\right)&=\frac{1}{\sqrt{\pi}}\cos\big(\frac23 x^{3/2}-\frac{\pi}{4}\big)x^{-1/4}+O(x^{-7/4})&\\\label{Airydue}
\mathrm{Bi}\left(-x\right) &=-\frac{1}{\sqrt{\pi}}\sin\big(\frac23 x^{3/2}-\frac{\pi}{4}\big)x^{-1/4}+O(x^{-7/4})& 
\end{eqnarray}
Using \eqref{SolutionAiry}, \eqref{Airyuno}, \eqref{Airydue}, the identity $\sin(\alpha - \beta) = \sin(\alpha)\cos(\beta) - \cos(\alpha)\sin(\beta)$, and the Taylor expansion, we obtain
\begin{eqnarray}
\nonumber \operatorname{st}(\psi^{(\xi))}_{\theta}(x)) &=& \operatorname{st}\Biggl(\frac{\sin\Bigl(\frac{2}{3}(-\sigma_{\theta, \xi})^{3/2} (1 - (1-\xi x)^{3/2})\Bigr)}{\xi(-\sigma_{\theta, \xi})^{3/2} (1-\xi x)^{1/4}}\Biggr) \\
\nonumber &=& \operatorname{st} \Biggl(\frac{\sin\Bigl(\frac{2}{3}\sqrt{-\theta} \frac{1 - (1-\xi x)^{3/2}}{\xi}\Bigr)}{\sqrt{-\theta} (1-\xi x)^{1/4}} \Biggr)\\
\nonumber &=& \frac{\sin(\sqrt{-\theta} x )}{\sqrt{-\theta}} \equiv \psi^{(0)}_{\theta}(x).
\end{eqnarray}

\section{Conclusions}\label{Conclusions}
We have shown in this paper that non-Archimedean methods are well suited for the study of singular Schr\"odinger operators, analyzing in detail the case of delta-like interactions on the half-line. The transfer principle, indeed, provides a very efficient way to construct a self-adjoint operator with infinite magnitude and infinitesimal range, in the abstract framework of nonstandard analysis. Then, the finiteness of the resolvent allows to conclude that the nonstandard singular operator can be actually restricted, in a canonical way, to a classical Schr\"odinger operator with a Robin boundary condition at the origin. Translated into the standard setting, this provides an approximation result for the point interaction by means of a suitable family of re-scaled, regular potentials. Remarkably, the spectral conditions needed to produce a non-trivial boundary condition in the limit (identified in Theorem \ref{main} by the condition $\psi'_{\theta}(M)=0$ for the coupling parameter $\theta$) arise in this context by a quite direct argument, i.e.~by considering an eigenfunction expansion with respect to the infinitesimal parameter associated to the range of the interaction.

This approach appears to be quite versatile, and it can be adapted to the study of point interactions on bounded domains and compact manifolds, as well as to more complicated perturbations of the Laplace operator, such as interactions supported on curves and surfaces, non-local singular operators, measure-type potentials. In addition, it could be helpful in order to investigate problems arising from multi-particle quantum systems with contact interactions. In this context, in fact, there are many unsolved questions concerning the approximation of idealized zero-range models by means of regular Schr\"odinger operators, and the non-Archimedean point of view could provide useful insights and open new perspectives.

\section*{Conflict of interest}
On behalf of all authors, the corresponding author states that there is no conflict of interest.

\noindent

\end{document}